\newcommand{\F}{\mathbb{F}}
\newcommand{\FF}{\overline{\mathbb{F}}}
\newcommand{\N}{\mathbb{N}}
\newcommand{\Z}{\mathbb{Z}}
\begin{document}

\title{Genus 2 Supersingular Isogeny Oblivious Transfer}
\author{Rams\`es Fern\`andez-Val\`encia\orcidID{0000-0002-8959-636X}} % 
\authorrunning{R. Fern\`andez-Val\`encia} % 
\institute{
Eurecat, Centre Tecnol\`ogic de Catalunya, IT Security Unit \\
Grup de Recerca en Nous Models de Ciberseguretat (2017 SGR 01239) \\
72 Bilbao Street, 08005 Barcelona, Catalonia \\
\email{ramses.fernandez@eurecat.org}}

\maketitle

\begin{abstract}
We present an oblivious transfer scheme that extends the proposal made in \cite{Barreto2018} based in supersingular isogenies to the setting of principally polarized supersingular abelian surfaces.
\keywords{Oblivious transfer \and Key exchange \and Abelian surfaces.}
\end{abstract}

%%%%%%%%%%%%%%%%%%%%%%%%%%%%%%%%%%%%%%%%%%%%%%%%%%%%%%%%%%%%%%%%%%%%%%%%%%%%%%%%%%
%%%%%%%%%%%%%%%%%%%%%%%%%%%%%%%%%%%%%%%%%%%%%%%%%%%%%%%%%%%%%%%%%%%%%%%%%%%%%%%%%%
%%%%%%%%%%%%%%%%%%%%%%%%%%%%%%%%%%%%%%%%%%%%%%%%%%%%%%%%%%%%%%%%%%%%%%%%%%%%%%%%%%

\section{Introduction}

Oblivious transfer is considered one of the critical problems in cryptography due the importance of the applications that can be built based on it. In particular, it is possible to prove that oblivious transfer is complete for secure multiparty computation \cite{Kilian1988} therefore, given an implementation of oblivious transfer, it is possible to securely evaluate any polynomial-time computable function without any additional primitive.

In this scheme a sender tries to communicate with a receiver in such a way that the sender sends one of, possibly, many messages to the receiver while remaining oblivious about the information that has been sent.

Among the recent applications of oblivious transfer we highlight blockchain technology. To be precise oblivious transfer plays an central role in the creation of private and verifiable smart contracts. Oblivious transfer can also be utilized for exchange of secrets, private information retrieval, and building protocols for signing contracts.

Currently, there exists some concern about the definition of cryptographic algorithms able to resist attacks using a quantum computer. Among the techniques presumably able to lead to quantum-resistant cryptographic schemes, we find the use of isogenies of supersingular elliptic curves, which has proved to be an interesting solution in the definition of key exchanges \cite{Feo2015}, digital signatures \cite{Feo2018} and also has been explored in the definition of hash functions \cite{Charles2009}, \cite{Tachibana2017} and oblivious transfer protocols \cite{Barreto2018}.

As pointed out by \cite{Castryck2019}, there is a general awareness that many proposals, such as the isogeny-based Diffie-Hellman key exchange, should be generalized to principally polarized abelian surfaces. Among the motivations that support the research based on isogenies of higher genus curves we find the fact that the genus $2$ isogeny graph is much regular than the graph in the genus $1$ setting and this gives the chance to achieve similar security levels with less isogeny computations together with the opportunity to perform better security analysis. Furthermore, as noted in \cite{Takashima2018}, the number of $2$-isogenies between elliptic curves is three whereas abelian surfaces have fifteen $(2,2)$-isogenies, and this fact may improve the security of schemes using supersingular hyperelliptic curves. 

In the literature one finds in \cite{Barreto2018} a proposal for an oblivious transfer based in isogenies of elliptic curves. The present work follows the philosophy of \cite{Castryck2019} and extends the results in \cite{Barreto2018} to a genus $2$ setting and, using the ideas developed in \cite{Flynn2019}, defines and provides the analysis for an oblivious transfer over isogenies of principally polarized abelian surfaces.

The structure of the document is the following: Section 1 reviews the genus $2$ construction of Flynn and Ti \cite{Flynn2019} for a supersingular Diffie-Hellman key exchange which represent the main ingredient, together with the ideas in \cite{Barreto2018}, for our definition for an oblivious transfer over principally polarized supersingular abelian surfaces, which is the defined in Section 2. Finally, Section 3 revolves around the correctness, privacity and indistinguishability of our proposal.

%\clearpage{\pagestyle{plain}\cleardoublepage}
\section{Genus $2$ SIDH}\label{2sidh}

This section reviews the genus $2$ scheme for a SIDH \cite{Flynn2019}. Before exploring the Flynn-Ti proposal, we consider appropriate to begin with a survey of the concepts and ideas that allows us to work with jacobians and Richelot isogenies. Those interested in a more detailed description and also interested in the structure of the graphs in the setting of genus $2$ curves are invited to read \cite{Flynn2019}.  

\subsection{Background on abelian surfaces}

Let $p$ be a prime number and consider $A$ an abelian variety over a field $\F_p$. %such that $p \equiv 2\,(\text{mod\,} 3)$ and 
Let $m\in\Z$ be such that $\gcd(m, p) = 1$. If $\hat{A}$ denotes the dual variety of $A$ and we write $\mu_m\left(\FF_p\right)$ for the group of $m$-roots of the unity in $\FF_p$, then the $m$-Weil pairing is defined as

\begin{equation}
\langle \star, \star\rangle_m : A\left(\FF_p\right)[m] \times \hat{A}\left(\FF_p\right)[m] \to \mu_m\left(\FF_p\right).
\end{equation}

%The following result states the main properties of the $m$-Weil pairing:

\begin{proposition}[Theorem 26.2.3 \cite{Galbraith2012}]
The $m$-Weil pairing is bilinear, alternating and non-degenerate.
\end{proposition}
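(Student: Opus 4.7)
The plan is to establish the three properties using the standard explicit construction of the $m$-Weil pairing via divisors and rational functions on $A$. First, I would recall that a point $Q \in \hat{A}\!\left(\FF_p\right)[m]$ corresponds to a degree-zero line bundle $L_Q$ with $L_Q^{\otimes m} \cong \mathcal{O}_A$, and hence to a divisor $D_Q$ such that $mD_Q = \mathrm{div}(f_Q)$ for some rational function $f_Q$ on $A$. For $P \in A\!\left(\FF_p\right)[m]$, the quotient $(f_Q \circ t_P)/f_Q$, where $t_P$ denotes translation by $P$, turns out to be a constant in $\mu_m\!\left(\FF_p\right)$, and this constant is by definition $\langle P, Q \rangle_m$.

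Bilinearity would then follow directly from the construction: additivity in $P$ is the telescoping identity $f_Q(X+P_1+P_2)/f_Q(X) = \bigl[f_Q(X+P_1+P_2)/f_Q(X+P_2)\bigr] \cdot \bigl[f_Q(X+P_2)/f_Q(X)\bigr]$, while additivity in $Q$ uses $L_{Q_1+Q_2} \cong L_{Q_1}\otimes L_{Q_2}$, allowing the choice $f_{Q_1+Q_2} = f_{Q_1}\cdot f_{Q_2}$ up to an $m$-th power that disappears under the pairing.

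For the alternating property, the pairing as stated lives on $A\times\hat{A}$, so alternation must be read after transporting to a pairing on $A[m]\times A[m]$ through the principal polarization that underlies the paper's setting. Concretely, the induced pairing $e_m: A[m]\times A[m]\to\mu_m$ satisfies $e_m(P,P)=1$, which I would prove by exhibiting a symmetry of the divisor-function data used to define the pairing, or, equivalently, by a direct cocycle computation on the line bundle corresponding to the polarization.

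Non-degeneracy is the subtlest part and will be the main obstacle. I would argue it by observing that the pairing induces a morphism of finite étale group schemes $A[m]\to\mathrm{Hom}\!\bigl(\hat{A}[m],\mu_m\bigr)$ (étale because $\gcd(m,p)=1$), and noting that both sides have order $m^{2\dim A}$, so it suffices to show injectivity. Injectivity amounts to the statement that if $f_Q \circ t_P = f_Q$ for every admissible $f_Q$ then $P=0$; this is the content of the Cartier duality between $A[m]$ and $\hat{A}[m]$ away from the characteristic, and formalising that duality, together with the compatibility between the divisor-theoretic description of $\hat{A}$ and its functor-of-points description, is the technical heart of the proof.
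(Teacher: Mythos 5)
The paper does not actually prove this proposition: it is imported wholesale as Theorem 26.2.3 of \cite{Galbraith2012}, with no argument given. So there is nothing internal to compare against, and your sketch should be judged as a free-standing proof outline. Its overall skeleton --- explicit divisor-theoretic construction, bilinearity by telescoping and by $L_{Q_1+Q_2}\cong L_{Q_1}\otimes L_{Q_2}$, ``alternating'' read through the principal polarization, non-degeneracy by injectivity plus an order count via Cartier duality --- is the standard and correct route.

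There is, however, a concrete error in the construction you build everything on. If $D_Q$ satisfies $mD_Q=\mathrm{div}(f_Q)$, the quotient $(f_Q\circ t_P)/f_Q$ is \emph{not} constant in general: its divisor is $m\bigl(t_{-P}^{*}D_Q-D_Q\bigr)$, and $t_{-P}^{*}D_Q$ only agrees with $D_Q$ up to linear equivalence (translations act trivially on $\mathrm{Pic}^0$, not on divisors). The correct construction introduces a second function: since $L_Q\in\mathrm{Pic}^0(A)$ one has $[m]^{*}L_Q\cong L_Q^{\otimes m}\cong\mathcal{O}_A$, so there is a $g_Q$ with $\mathrm{div}(g_Q)=[m]^{*}D_Q$ and $g_Q^{m}=c\cdot\bigl(f_Q\circ[m]\bigr)$; because $[m]\circ t_P=[m]$ for $P\in A[m]$, the function $(g_Q\circ t_P)/g_Q$ has $m$-th power equal to $1$ and is therefore a constant in $\mu_m$, and \emph{that} constant is $\langle P,Q\rangle_m$. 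The slip propagates: your telescoping identity for bilinearity in $P$ and your reformulation of non-degeneracy as ``$f_Q\circ t_P=f_Q$ for all admissible $f_Q$ implies $P=0$'' must both be restated with $g_Q$ in place of $f_Q$ (after which they go through; non-degeneracy is then exactly Cartier duality between $A[m]$ and $\hat{A}[m]$ for $\gcd(m,p)=1$, as you say). Finally, you are right that ``alternating'' does not typecheck on $A[m]\times\hat{A}[m]$ and must be read for the induced pairing $e_m^{\lambda}(P,P')=\langle P,\lambda(P')\rangle_m$ attached to the (principal) polarization $\lambda$; proving $e_m^{\lambda}(P,P)=1$ does require the extra input that $\lambda=\varphi_{\mathcal{L}}$ comes from an ample line bundle, which your sketch gestures at but does not supply.
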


A subgroup $S\subset A[m]$ is called
\begin{enumerate}
\item proper if for all $1< n \leq m:\, A[n]\subsetneq S$;
\item maximal if $S$ is not contained in any other subgroup of $A[m]$, and
\item $m$-isotropic if the $m$-Weil pairing is trivial when restricted to $S$.
\end{enumerate}

A Richelot isogeny is a $(2,2)$-isogeny between jacobians $J_H$ of genus $2$ curves $H$, that is: its kernel is isomorphic, as a group, to $\F_2\otimes \F_2$ and is maximal isotropic with respect to the $2$-Weil pairing.

Those familiar with isogeny graphs know that $j$-invariants play a central role in the definition of the graph. In the genus $2$ context the role played by the $j$-invariant is replaced with the so called $G_2$-invariants \cite{Cardona2002}. The $G_2$-invariants of a genus $2$ curve are absolute invariants that characterize the isomorphism class of the curve. Let $H$ be a genus $2$ curve and let $J_2$, $J_4$, $J_6$, $J_8$ and $J_{10}$ denote the associated Igusa invariants, then the $G_2$-invariants are defined as follows:
\begin{equation}
\mathcal{G} = (g_1, g_2, g_3) := \left( \frac{J_2^5}{J_{10}}, \frac{J^3_2J_4}{J_{10}}, \frac{J_2^2J_6}{J_{10}} \right).
\end{equation}

%Let us consider a square-free polynomial $h\in\F_p[x]$, such that $\deg(h) = 6$, and a hyperelliptic curve $H$ of genus $2$ given by $H: y^2 = h(x) = \prod_{i=1}^6(x-a_i)$. Then the $2$-torsion of the jacobian $J_H$ is given by divisor classes 
%\begin{equation}
%[(a_i, 0) - (a_j, 0)] = [(a_j, 0) - (a_i, 0)].
%\end{equation}

%Following Section 3.2 \cite{Castryck2019}, a subgroup of the $2$-torsion that is $2$-maximal isotropic leads to a representation of Richelot isogenies as sets of factors of $h$ that are pairwise coprime, that is, if
%\begin{equation}
%\begin{cases}
%G_1 := (x-a_1)(x-a_2) \\
%G_2 := (x-a_3)(x-a_4) \\
%G_3 := (x-a_5)(x-a_6)
%\end{cases}
%\end{equation}
%then the $(2,2)$-isogeny with kernel given by 
%\begin{equation}
%\{[(a_1, 0) - (a_2, 0)], [(a_3, 0) - (a_4, 0)], [(a_5, 0) - (a_6, 0)]\}
%\end{equation}
%can identified by $\mathcal{G} := \{G_1, G_2, G_3\}$.

It is known (Chapter V, Section 13 \cite{Milne1986}) that an ample divisor $\mathcal{L}$ of an abelian variety $A$ defines an isogeny $\varphi_{\mathcal{L}}:A \to \hat{A}$ called a polarization of $A$. If $\varphi_{\mathcal{L}}$ is an isomorphism, then the polarization is called principal, and $A$ is a principally polarized abelian variety. The degree of a polarization is its degree as an isogeny.

The proposal for a genus $2$ SIDH in \cite{Flynn2019} follows closely the steps taken in \cite{Feo2015}. This is possible, essentially, due to the following results, which allow us to work with jacobians of hyperelliptic curves of genus $2$:

\begin{theorem}[Theorem 1 \cite{Flynn2019}]\label{theorem_1}
Given a prime $p$ and a finite field $\F_p$. If $A$ is a principally polarized abelian surface over $\FF_p$, then:
\begin{enumerate}
\item $A\cong J_H$, where $J_H$ denotes the jacobian of some smooth (hyperelliptic) genus $2$ curve $H$, or
\item $A \cong E_1 \times E_2$ for some elliptic curves $E_1$ and $E_2$. 
\end{enumerate}
\end{theorem}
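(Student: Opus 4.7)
The plan is to analyze the theta divisor attached to the principal polarization. Fix an effective symmetric divisor $\Theta\subset A$ representing the polarization $\varphi_{\mathcal{L}}$; since $\varphi_{\mathcal{L}}$ is an isomorphism we have $\Theta^{2}=2$, and by adjunction the arithmetic genus of $\Theta$ is $p_{a}(\Theta)=1+\tfrac{1}{2}\Theta^{2}=2$. The whole proof then runs on the dichotomy: either $\Theta$ is geometrically irreducible, or it splits as a sum of two components.

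In the irreducible case I would first argue that $\Theta$ is smooth, using that the set of translations of $A$ preserving $\Theta$ is finite (otherwise its identity component would be a positive-dimensional abelian subvariety inside $\Theta$, forcing $\Theta^{2}=0$), together with the classical fact that the singular locus of an irreducible theta divisor on a p.p.\ abelian surface is empty. Thus $\Theta$ is a smooth projective curve of genus $2$; the inclusion $\Theta\hookrightarrow A$ induces, by the universal property of the Jacobian, a morphism $J_{\Theta}\to A$, and by Torelli in dimension two this is an isomorphism of principally polarized abelian varieties. This delivers alternative (1).

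In the reducible case, write $\Theta=\Theta_{1}+\Theta_{2}$ with both $\Theta_{i}$ effective. The identity $\Theta^{2}=\Theta_{1}^{2}+2\Theta_{1}\cdot\Theta_{2}+\Theta_{2}^{2}=2$, together with nonnegativity of each term (the $\Theta_{i}$ are effective and irreducible, and $\Theta_{1}\cdot\Theta_{2}\geq 1$ since $\Theta$ is connected), forces $\Theta_{1}^{2}=\Theta_{2}^{2}=0$ and $\Theta_{1}\cdot\Theta_{2}=1$. An effective irreducible divisor $D$ on an abelian surface with $D^{2}=0$ is a translate of a one-dimensional abelian subvariety (its stabilizer under translation has positive dimension by the same pairing argument as above, so the neutral component of the stabilizer is an elliptic curve contained in $D$, hence equal to a translate of $D$). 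So $\Theta_{i}=E_{i}+a_{i}$ for elliptic curves $E_{i}\subset A$, and the addition map $\sigma:E_{1}\times E_{2}\to A$, $(x,y)\mapsto x+y$, is an isogeny because $E_{1}\cap E_{2}$ is finite (from $\Theta_{1}\cdot\Theta_{2}=1$). A direct comparison of the pulled-back polarization $\sigma^{*}\Theta$ with the product principal polarization on $E_{1}\times E_{2}$ gives $\deg\sigma=1$, so $\sigma$ is an isomorphism, yielding alternative (2).

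The main obstacle I anticipate is the reducible case, specifically promoting the addition map from an isogeny to an isomorphism: this rests on a careful bookkeeping of the self-intersection numbers to compute $\deg(\sigma)$, and one has to be a bit delicate in positive characteristic to rule out an inseparable factor, which is handled by observing that $\varphi_{\sigma^{*}\mathcal{L}}$ must coincide with the principal polarization on $E_{1}\times E_{2}$ induced by $\mathcal{L}|_{E_{1}}\boxtimes\mathcal{L}|_{E_{2}}$, forcing degree one on the nose. The smoothness assertion in the irreducible case is comparatively routine once the stabilizer argument is in place.
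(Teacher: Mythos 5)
Your argument is mathematically sound in outline, but it takes a genuinely different route from the paper: the paper does not prove the statement at all, it simply cites Theorem 4 of Oort--Ueno for alternative (1) and Theorem 3.1 of Gonz\'alez--Gu\`ardia--Rotger for alternative (2). What you have written is essentially a self-contained reconstruction of the classical Weil/Oort--Ueno proof via the theta divisor: compute $\Theta^{2}=2$, split on irreducibility of $\Theta$, and in the reducible case force $\Theta_{1}^{2}=\Theta_{2}^{2}=0$, $\Theta_{1}\cdot\Theta_{2}=1$ so that the addition map from $E_{1}\times E_{2}$ is a degree-one isogeny. This buys a proof that actually exhibits the curve $H$ (as the theta divisor itself) and works uniformly in positive characteristic, at the cost of having to verify several classical facts that the citation route gets for free.

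Two steps deserve shoring up. First, you assert without argument that the reducible case has exactly two components; this needs the observation that $2=\Theta^{2}=\sum_{i}n_{i}\,(\Theta\cdot\Theta_{i})$ with each $\Theta\cdot\Theta_{i}\geq 1$ by ampleness, which rules out three or more components and also the non-reduced case $\Theta=2\Theta_{1}$ (since $\Theta_{1}^{2}$ is even). Second, in the irreducible case the smoothness of $\Theta$ is not best handled by quoting ``the classical fact'': the clean argument is that if $\Theta$ were singular its normalization would have geometric genus at most $1$, so $\Theta$ would lie in a translate of an abelian subvariety of dimension at most $1$, contradicting ampleness; and the final identification $J_{\Theta}\cong A$ is not literally Torelli but the standard comparison of the two principal polarizations under the isogeny $J_{\Theta}\to A$ induced by $\Theta\hookrightarrow A$. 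Note also that the statement is over $\FF_p$ (algebraically closed), so you are spared the rationality and twisting issues that would otherwise complicate both cases.
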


\begin{proof}
The proof of (1) follows from Theorem 4 \cite{Oort1973} whereas (2) is a direct consequence of Theorem 3.1 \cite{Gonzalez2004}. 
\end{proof}

Another key ingredient is Proposition 1 \cite{Flynn2019} which proves the fact that isogenies with isotropic kernels preserve principal polarizations.
%, what motivates using Richelot isogenies.

\begin{proposition}[Proposition 1 \cite{Flynn2019}]
Let $H$ be a hyperelliptic curve of genus $2$ over $\F_{p^n}$. Let $K$ be a finite, proper and $\F_{p^n}$-rational subgroup of $J_H(\F_{p^n})$. There exists a principally polarized abelian surface $A$ over $\F_{p^n}$ and an isogeny $\varphi: J_H\to A$ with kernel generated by $K$ if, and only if, $K$ is a maximal $m$-isotropic subgroup of $J_H[m]$ for some positive integer $m$.
\end{proposition}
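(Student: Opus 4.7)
My plan is to reduce the problem to one about descent of polarizations along quotient isogenies, using the principal polarization $\lambda_H \colon J_H \xrightarrow{\sim} \hat{J_H}$ already supplied by the jacobian. Via $\lambda_H$, the $m$-Weil pairing $\langle\star,\star\rangle_m$ transports to a non-degenerate alternating pairing $e_m$ on $J_H[m]$ itself, so isotropy of $K \subset J_H[m]$ becomes a condition that can be checked directly. The existence of \emph{some} isogeny $\varphi\colon J_H \to A$ with $\ker\varphi = K$ is automatic (take $A := J_H/K$); the real content is deciding when this $A$ inherits a principal polarization.

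For the forward direction I would start from $(A,\lambda_A)$ with $\lambda_A$ principal and consider the pulled-back polarization $\hat\varphi \circ \lambda_A \circ \varphi$ on $J_H$, which has degree $(\deg\varphi)^2$. The diagram
$$
\begin{array}{ccc}
J_H & \xrightarrow{\varphi} & A \\
\lambda_H\downarrow & & \downarrow\lambda_A \\
\hat{J_H} & \xleftarrow{\hat\varphi} & \hat{A}
\end{array}
$$
and the functoriality of the Weil pairing translate the condition $\varphi(K)=0$ into the vanishing of $e_m$ on $K \times K$, i.e.\ $m$-isotropy. A degree count, where principality of $\lambda_A$ forces $(\deg\varphi)^2 = m^4$ and hence $|K| = m^2$, together with non-degeneracy of $e_m$, then promotes isotropy to maximality.

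For the converse I would take $K$ maximal $m$-isotropic and build a principal polarization on $A := J_H/K$ by descending along $\varphi$ a symmetric ample line bundle representing $\lambda_H^{\otimes m}$. Mumford's descent criterion (see \cite{Milne1986}, Chapter V) says that such a bundle descends precisely when $K$ is isotropic for the commutator pairing on its theta group, and via $\lambda_H$ this condition agrees with $m$-Weil isotropy. The degree of the resulting polarization on $A$ equals $\deg(\lambda_H^{\otimes m})/|K|^2 = m^4/m^4 = 1$, so it is principal.

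The main obstacle is the descent step: aligning the Weil-pairing isotropy used in the statement with the theta-group commutator isotropy that drives Mumford's machinery. Rather than reproving this equivalence, I would invoke the descent theorem from \cite{Milne1986} and lean on the symplectic structure of $J_H[m]\simeq(\Z/m)^4$ to pin down the degree arithmetic cleanly.
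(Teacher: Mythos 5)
Your proposal takes essentially the same route as the paper (and as Flynn--Ti's original argument): form the quotient $A = J_H/K$, descend a suitable multiple of the principal polarization $\lambda_H$ along $\varphi$ via the descent criterion of Milne, Chapter V, Theorem 16.8 and Remark 16.9, and verify principality by the degree count $m^4/|K|^2 = 1$. You additionally sketch the ``only if'' direction, which the paper's proof leaves implicit, and your choice of $m\lambda_H$ (rather than the paper's $[\deg\varphi]\circ\lambda$) is the normalization that actually makes the degree arithmetic yield $1$.
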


\begin{proof}
The existence of $A$ follows immediately from Theorem 10.1 (Chapter VII, Section 10 \cite{Milne1986}). In order to prove that it is in fact a principally polarized abelian surface, one defines a polarization $\mu = [\deg(\varphi)]\circ \lambda$ on $J_H$, which is equipped with a principal polarization $\lambda$. One gets a polarization on $J_H/K$ of degree $1$ using Theorem 16.8 and Remark 16.9 (Chapter V, Section 16 \cite{Milne1986}).
\end{proof}

\newpage

\subsection{Set-up}

The set-up of the scheme requires the selection of a prime number $p = 2^n3^m f - 1$, where $f\in\Z$ is small and the exponents $n, m\in\Z$ are such that $2^n \approx 3^m$. 

It is also required to select a base supersingular hyperelliptic curve, which can be chosen using the fact that coverings preserve supersingularity. We pick $H_0: y^2 = x^6 + 1$, which is supersingular as it is the double cover of the elliptic curve $C: y^2 = x^3 +1$, which is supersingular in $\F_p$ for $p \equiv 2\,(\text{mod}\, 3)$. We then use a random sequence of Richelot isogenies to get a random principally polarized supersingular abelian surface which, by Theorem \ref{theorem_1}, is isomorphic to the jacobian of a supersingular hyperelliptic curve $H$.

We end up by selecting bases for the torsion subgroups $J_H[2^n]$ and $J_H[3^m]$ and we do so by following Theorem 6 \cite{Lang2019} and choosing points $\{P_1, P_2, P_3, P_4\}$ for $J_H[2^n]$ and $\{Q_1, Q_2, Q_3, Q_4\}$ for $J_H[3^m]$.

\subsection{First round}

This stage requires Alice to choose a set of secret scalars $\{a_i\}_{i = 1,\dots , 12}$ using the techniques described in Section 3.2 \cite{Flynn2019}, together with an isogeny $\varphi_A : J_H \to J_A$ whose kernel is generated, following Lemmata 2 and 3 and Proposition 2 in \cite{Flynn2019}, by the set

\begin{equation}
K_A = \left\langle \sum_{i=1}^4[a_i]P_i, \sum_{i=5}^8[a_i]P_{i-4}, \sum_{i=9}^{12}[a_i]P_{i-8} \right\rangle.
\end{equation}

Alice computes $\{\varphi_A(Q_j)\}_{j = 1, \dots, 4}$ and sends $(\mathcal{G}(J_A), \{\varphi_A(Q_j)\}_{j = 1,\dots, 4})$ to Bob, where $\mathcal{G}(J_\star)$ denotes the $G_2$-invariants of the curve associated to the jacobian.

At the same time, Bob takes a set of secret scalars $\{b_i\}_{i = 1,\dots, 12}$ together with an isogeny $\varphi_B : J_H \to J_B$ whose kernel is generated by

\begin{equation}
K_B = \left\langle\sum_{i=1}^4[b_i]Q_i, \sum_{i=5}^8[b_i]Q_{i-4}, \sum_{i=9}^{12}[b_i]Q_{i-8} \right\rangle.
\end{equation}

Bob computes $\{\varphi_B(Q_j)\}_{j = 1, \dots, 4}$ and sends the tuple $(\mathcal{G}(J_B), \{\varphi_B(P_j)\}_{j = 1,\dots, 4})$ to Alice.

\subsection{Second round}

Once Alice receives the tuple $(\mathcal{G}(J_B), \{\varphi_B(P_j)\}_{j = 1,\dots, 4})$, she will be able to obtain $J_B$ from $\mathcal{G}(J_B)$ and to compute

\begin{equation}
K_{BA} = \left\langle \sum_{i=1}^4[a_i]\varphi_B(P_i), \sum_{i=5}^8[a_i]\varphi_B(P_{i-4}), \sum_{i=}^{12}[a_i]\varphi_B(P_{i-8}) \right\rangle,
\end{equation}

which generates the kernel of a $(2^n, 2^{n-k}, 2^k)$-isogeny $\varphi'_A: J_B\to J_{BA}$.

Similarly, after receiving Alice's tuple, Bob will be able to obtain $J_A$ and to compute the kernel of a $(2^m, 2^{m-k}, 2^k)$-isogeny $\varphi'_B:J_A\to J_{AB}$ generated by the set

\begin{equation}
K_{AB} = \left\langle \sum_{i=1}^4[b_i]\varphi_A(Q_i), \sum_{i=5}^8[b_i]\varphi_A(Q_{i-4}), \sum_{i=9}^{12}[b_i]\varphi_A(Q_{i-8}) \right\rangle.
\end{equation}

We observe that Bob and Alice can use the sets $\mathcal{G}$, associated to $J_{AB}$ and $J_{BA}$, as their shared secret due to the following isomorphism, from which we deduce the correctness of the scheme:

\begin{equation}
J_{AB} = \frac{J_A}{\varphi_A(K_B)} \cong \frac{J_H}{\langle K_A, K_B \rangle} \cong \frac{J_B}{\varphi_B(K_A)} = J_{BA}.
\end{equation}
%\clearpage{\pagestyle{plain}\cleardoublepage}
\section{Genus $2$ oblivious transfer}

In this section we introduce the extension to genus $2$ curves of the proposal for a supersingular isogeny oblivious transfer in \cite{Barreto2018}. We will denote the scheme as $2$-SIOT.

\subsection{Set-up}

Let us consider a prime $p = 2^n3^m f - 1$ for $f\in\Z$ small and exponents $n, m\in\Z$ such that $2^n\approx 3^m$. Fix a supersingular hyperelliptic curve $H$ following the criteria of Section \ref{2sidh} and use a random sequence of Richelot isogenies to get a random principally polarized supersingular abelian surface. Finally we consider generating sets $\{P_1, P_2, P_3, P_4\}$ and $\{Q_1, Q_2, Q_3, Q_4\}$ for $J_H[2^n]$ and $J_H[3^m]$ respectively.

\subsection{First round}

In this first stage Alice receives to messages $m_0, m_1$ from the set of messages $\mathcal{M}$. She will have to select secret scalars $\{a_i\}_{i = 1,\dots , 12}$, following Section 3.2 \cite{Flynn2019}, together with an isogeny $\varphi_A : J_H \to J_A$ whose kernel is generated by

\begin{equation}
K_A = \left\langle \sum_{i=1}^4[a_i]P_i, \sum_{i=5}^8[a_i]P_{i-4}, \sum_{i=9}^{12}[a_i]P_{i-8} \right\rangle.
\end{equation}

Furthermore Alice computes $\{\varphi_A(Q_j)\}_{j = 1, \dots, 4}$ and $\mathcal{G}(J_A)$ in order to define her public key $pk_A = (\mathcal{G}(J_A),\{\varphi_A(Q_j)\}_{j = 1, \dots, 4})$ and sends it to Bob.

At the same time Bob gets as input a bit $\beta\in\{0,1\}$. He is required to select a set of secret scalars $\{b_i\}_{i = 1,\dots , 12}$, according to Section 3.2 \cite{Flynn2019}, together with an isogeny $\varphi_B : J_H \to J_B$ whose kernel is generated by

\begin{equation}
K_B = \left\langle \sum_{i=1}^4[b_i]Q_i, \sum_{i=5}^8[b_i]Q_{i-4}, \sum_{i=9}^{12}[b_i]Q_{i-8} \right\rangle.
\end{equation}

After receiving $pk_A$, Bob checks if $\{\varphi_A(Q_j)\}_{j=1,2,3,4}\in J_A[3^m]$. If not, $pk_A$ is rejected. Otherwise Bob computes elements $U_1, U_2, U_3, U_4 \in J_B[2^n]$ in order to define $R_j = \varphi_B(P_j) - \beta U_j$ for $j \in\{1, 2, 3, 4\}$. Bob defines his public key as $pk_B = (\mathcal{G}(J_B), \{R_j\}_{j = 1, 2, 3, 4})$. This key is sent to Alice.

\subsection{Second round}

After receiving Bob's public key, Alice checks if $\{R_j\}_{j=1,2,3,4} \in J_B[2^n]$. As in Bob's situation, if the condition does not hold, Bob's public key is rejected. Otherwise Alice computes the set, for $k = 0, 1$:

\begin{equation}
\begin{split}
K_{BA_k} = \left\{ \sum_{i=1}^4[a_i]\varphi_B(P_i)-\beta U_i + kU_i, \sum_{i=5}^8[a_i]\varphi_B(P_{i-4})-\beta U_{i-4} + kU_{i-4},\right. \\
\left. {}\sum_{i=9}^{12}[a_i]\varphi_B(P_{i-8})-\beta U_{i-8} + kU_{i-8} \right\},
\end{split}
\end{equation}

which generates the kernel for an isogeny $\varphi'_{A_k}:J_B \to J_{BA_k}$. Let us denote with $\mathcal{G}_k$ the set $\mathcal{G}$ associated to $J_{BA_k}$. Finally, given a symmetric encryption scheme $(Enc, Dec)$, Alice computes $c_k := Enc(\mathcal{G}_k, m_k)$ for $k = 0, 1$ and sends the pair $(c_0, c_1)$ to Bob.

On the other side, Bob computes the set

\begin{equation}
K_{AB} = \left\{ \sum_{i=1}^4[b_i]\varphi_A(Q_i), \sum_{i=5}^8[b_i]\varphi_A(Q_{i-4}), \sum_{i=9}^{12}[b_i]\varphi_A(Q_{i-8}) \right\},
\end{equation}

which generates the kernel of an isogeny $\varphi'_B: J_A\to J_{AB}$. Bob will compute $\mathcal{G}(J_{AB}):= \mathcal{G}_\beta$ which will allow him to perform $m_\beta = Dec(\mathcal{G}_\beta, c_\beta)$.

Below follows the pseudocode view of the $2$-SIOT protocol:
\newpage

\begin{center}
\fbox{%
\pseudocode{
\textbf{Set-up} \<\< \\ [0.1\baselineskip][\hline]
\<\< \\[-0.5\baselineskip]
\text{Prime }p = 2^n3^mf-1 \\
f,n,m\in\Z \text{ s.t. } f\text{ small, } 2^n\approx 3^m \\
H \text{ supersingular hyperelliptic curve} \\
\{P_1,P_2,P_3,P_4\}\text{ gen. } J_H[2^n] \\
\{Q_1,Q_2,Q_3,Q_4\}\text{ gen. } J_H[3^m] \\ [0.5\baselineskip][\hline]
\<\< \\[-0.75\baselineskip]
\textbf{Alice} \<\< \textbf{Bob} \\ [0.1\baselineskip][\hline]
\<\< \\[-0.5\baselineskip]
\text{Input: } m_0, m_1\in\mathcal{M} \<\< \text{Input: }\beta\in\{0,1\} \\
\text{Output: none} \<\< \text{Output: } m_\beta \\
\text{Secret: }\{a_i\}_{i\in\{1,\dots, 12\}} \<\< \text{Secret: }\{b_i\}_{i\in\{1,\dots, 12\}} \\
\text{Isogeny: }\varphi_A: J_H \to J_A \<\< \text{Isogeny: }\varphi_B: J_H \to J_B \\
\left\langle \sum_{i=1}^4[a_i]P_i, \sum_{i=5}^8[a_i]P_{i-4}, \sum_{i=9}^{12}[a_i]P_{i-8} \right\rangle \<\< \left\langle \sum_{i=1}^4[b_i]Q_i, \sum_{i=5}^8[b_i]Q_{i-4}, \sum_{i=9}^{12}[b_i]Q_{i-8} \right\rangle \\
\{\varphi_A(Q_j)\}_{j\in\{1,2,3,4\}} \<\< \{\varphi_B(P_j)\}_{j\in\{1,2,3,4\}} \\
\mathcal{G}(J_A) \<\< \mathcal{G}(J_B) \\
pk_A = (\mathcal{G}(J_A), \{\varphi_A(Q_j)\}_{j=1,2,3,4}) \<\< \\
\< \sendmessageright*[1cm]{pk_A} \< \\
\<\< \{\varphi_A(Q_j)\}_{j=1,2,3,4}\notin J_A[3^m] \Rightarrow \text{Reject} \\
\<\< \{U_j\}_{j=1,2,3,4}\in J_B[2^n] \\
\<\< R_j = \varphi_B(P_j) - \beta U_j,\, j\in\{1,2,3,4\} \\ 
\<\< pk_B = (\mathcal{G}(J_B), \{R_j\}_{j=1,2,3,4}) \\
\< \sendmessageleft*[1cm]{pk_B} \< \\
\{R_j\}_{j=1,2,3,4}\notin J_B[2^n] \Rightarrow \text{Reject} \<\< \\
k\in\{0,1\} \<\< \\
T_{i,k} := [a_i]\varphi_B(P_i) - \beta U_i+kU_i \\
\left\langle \sum_{i=1}^4 T_{i,k}, \sum_{i=5}^8 T_{i-4,k}, \sum_{i=9}^{12} T_{i-8,k} \right\rangle \<\< \\
\varphi'_{A_k} : J_B \to J_{BA_k} \<\< \\
\mathcal{G}_k = \mathcal{G}(J_{BA_k}) \<\< \\
c_k = Enc(\mathcal{G}_k, m_k),\,\forall k\in\{0,1\} \<\< \\
\< \sendmessageright*[1cm]{(c_0, c_1)} \< \\
\<\< \varphi_j = \varphi_A(Q_j),\, j\in\{1,2,3,4\} \\
\<\< \left\langle \sum_{i=1}^4[b_i]\varphi_i, \sum_{i=5}^8[b_i]\varphi_{i-4}, \sum_{i=9}^{12}[b_i]\varphi_{i-8} \right\rangle \\
\<\< \varphi'_B: J_A \to J_{AB} \\
\<\< \mathcal{G}_\beta = \mathcal{G}(J_{AB}) \\
\<\< m_\beta = Dec(\mathcal{G}_\beta, c_\beta)
}
}
\end{center}
%\clearpage{\pagestyle{plain}\cleardoublepage}
\section{Correctness and privacy}

This section is devoted to check the correctness, the privacy and the indistinguishability of the $2$-SIOT proposal. The reader will realise that many of the arguments used in \cite{Barreto2018} can be extended to our setting with a few but tedious computations. The first part is focused on the computational problems which are assumed to be hard in a quantum setting. Correctness is immediately checked in a second subsection and we finish by justifying the privacy and the indistinguishability of our proposal.

\subsection{Background on two-party computation}

A map $\varepsilon:\N \to [0,+\infty)$ is negligible if for every positive polynomial $Q\in\F_p[x]$ and for every $n\in\N$ the following condition holds: $\varepsilon(n) < (Q(n))^{-1}$.

A probability ensemble $X = \{X(n,a)\}_{n\in N, a\in\{0,1\}^\star}$ is a sequence of random variables. The variable $n$ represents the security parameter, whereas $a$ represents the inputs of each party.

Two distribution ensembles $X$ and $Y$ are computationally indistinguishable, and we write it as $X\stackrel{c}{\equiv}Y$, if for every non-uniform polynomial-time algorithm $D$ there exists a negligible map $\varepsilon$ such that:

\begin{equation}
|P(D(X(n,a)) = 1) - P(D(Y(n,a)) = 1)| \leq \varepsilon(n), \forall n\in \N, \forall a\in \{0,1\}^\star.
\end{equation}

The following concepts are well known (Section 7.2 \cite{Goldreich2004}):

A two-party problem can be cast by defining a random process that maps pairs of inputs to pairs of outputs. Such a process is called a functionality and is denoted by a map of the form $F: \{0,1\}^\star \times \{0,1\}^\star \to \{0,1\}^\star \times \{0,1\}^\star$, with $F(x,y) = (F_1(x,y), F_2(x,y))$. Here $x$ denotes the input of first party, and $y$ denotes the input of the second party. After evaluating $F$, the first party should receive $F_1(x,y)$ and the second party should receive $F_2(x,y)$.

Let $F = (F_1, F_2)$ be a functionality and $\pi$ a two-party protocol for computing $F$. The view of the first (resp. second) party during an execution of $\pi$ on $(x,y)$ is denoted by $view_i(x,y)$ is $(x, r, m_1,\dots, m_t)$ (resp. $(y, r, m_1,\dots, m_t)$), where $r$ is the outcome of the first (resp. second) party's internal coin tosses and $m_j$ represents the $j$-th message it has received.

A concept particularly elusive is that of malicious party. We follow Section 7.2.3 \cite{Goldreich2004} in order to discuss some ideas about it. 

When we are working in a malicious setting, it is important to keep in mind that there is no way to force parties to participate in the protocol. One malicious behaviour could simply be not participating in the protocol. This implies, somehow, that if one of the parties is participating against its will, another admitted behaviour in a malicious setting could be aborting at any time, even once one of the parties gets the expected output. Another important fact to bare in mind is that in a malicious setting, it is not possible to determine the inputs of a malicious party.

In general, when talking about malicious parties, we admit the following situations:
\begin{enumerate}
\item Parties refusing to participate in the protocol, once it has been started.
\item Parties substituting their local input.
\item Parties aborting the protocol prematurely.
\end{enumerate}

\subsection{Supersingular isogeny problems}

It is important to observe that the security analysis of the genus $1$ problems extends to our setting easily and so, those interested in further details are invited to read \cite{Galbraith2018}. 

Keeping the notations used so far, let us consider a supersingular hyperelliptic curve $H$ over $\F_{p^2}$ with genus $2$ together with independent bases $\{P_i\}_{i=1,\dots 4}$ and $\{Q_i\}_{i=1,\dots 4}$ for $J_H[2^n]$ and $J_H[3^m]$ respectively.

\begin{problem}[DSSI: Decisional supersingular isogeny problem]\label{dssi}
Let $H'$ be another supersingular hyperelliptic curve over $\F_{p^2}$ with genus $2$. Decide whether $J_{H'}$ is $(3^m, 3^m)$-isogenous to $J_H$.
\end{problem} 

\begin{problem}[CSSI: Computational supersingular isogeny problem]\label{cssi}
Let $H_A$ be a supersingular hyperelliptic curve over $\F_{p^2}$ with genus $2$. Let $\varphi_A: J_H \to J_A$ be an isogeny whose kernel is generated by

\begin{equation}
K_A = \left\{ \sum_{i=1}^4[a_i]P_i, \sum_{i=5}^8[a_i]P_{i-4}, \sum_{i=9}^{12}[a_i]P_{i-8} \right\}
\end{equation}

for some $a_i\in\F_{2^n}$. Given $J_A$ and $\{\varphi_A(Q_j\}_{j=1,\dots, 4}$, find generators for $K_A$.
\end{problem}

\begin{problem}[SSCDH: Supersingular computational Diffie-Hellman]\label{sscdh}
Let us consider $\varphi_A: J_H \to J_A$ an isogeny whose kernel is generated by

\begin{equation}
K_A = \left\{ \sum_{i=1}^4[a_i]P_i, \sum_{i=5}^8[a_i]P_{i-4}, \sum_{i=9}^{12}[a_i]P_{i-8} \right\},
\end{equation}

for some $a_i\in\F_{2^n}$ and let $\varphi_B:J_H\to J_B$ whose kernel is generated by

\begin{equation}
K_B = \left\{ \sum_{i=1}^4[b_i]Q_i, \sum_{i=5}^8[b_i]Q_{i-4}, \sum_{i=9}^{12}[b_i]Q_{i-8} \right\}
\end{equation}

for some $b_i\in\F_{3^m}$. Given $\{\varphi_A(Q_j), \varphi_B(P_j)\}_{j=1,\dots, 4}$ and the jacobians $J_A, J_B$, find the set $\mathcal{G}$ associated to $\frac{J_H}{\langle K_A, K_B \rangle}$.
\end{problem}

\subsection{Correctness}

The correctness of this proposal relies on the fact that both parties are honest, that is just one of the messages $m_\beta = Dec(\mathcal{G}_\beta, c_\beta)$ should be decrypted. We need to prove that if the bit $\beta$ chosen by Bob agrees with the element $k$ chosen by Alice, then it is possible to share a secret. This follows from

\begin{equation}
J_{AB} \cong \frac{J_A}{K_{AB}} \cong \frac{J_H}{\langle K_A, K_B \rangle} \cong J_{BA},
\end{equation}

therefore $\mathcal{G}(J_{AB}) = \mathcal{G}(J_{BA_k})$.

\subsection{Privacy}

Let us denote with $\mathfrak{Alice}$ and with $\mathfrak{Bob}$ the malicious alter egos of Alice and Bob, respectively. In order to prove the privacity of the proposal, we use the following definition:

\begin{definition}[Definition 2.6.1 \cite{Hazay2010}]
A two-message two-party probabilistic polynomial-time protocol is private oblivious transfer if the following holds:

\begin{enumerate}
\item Non-triviality: If Alice and Bob follow the protocol, then after an execution in which Alice has for input any pair of $m_0,m_1\in\mathcal{M}$, and Bob has for input a bit $b\in\{0,1\}$, then the output of Bob is $m_b$.

\item Privacy for $\mathfrak{Alice}$:\label{mean_alice} For every non-uniform polynomial-time $\mathfrak{Alice}$ and every auxiliary input $z\in\{0,1\}^\star$, the following condition holds:
\begin{multline*}
\{view_{\mathfrak{Alice}}(\mathfrak{Alice}(1^n,z), \text{Bob}(1^n,0))\}_{n\in\N} \stackrel{c}{\equiv} \\
\{view_{\mathfrak{Alice}}(\mathfrak{Alice}(1^n,z), \text{Bob}(1^n,1))\}_{n\in\N}.
\end{multline*}

\item Privacy for $\mathfrak{Bob}$:\label{mean_bob} For every non-uniform deterministic polynomial-time $\mathfrak{Bob}$, every auxiliary input $z\in\{0,1\}^\star$ and ever $m_0, m_1, m\in\{0,1\}^\star$ such that $|m_0| = |m_1| = |m|$ one of the conditions below hold:

\begin{multline*}
\{view_{\mathfrak{Bob}}(\text{Alice}(1^n,(m_0,m_1)), \mathfrak{Bob}(1^n,z))\}_{n\in\N} \stackrel{c}{\equiv} \\
\{view_{\mathfrak{Bob}}(\text{Alice}(1^n,(m_0,m)), \mathfrak{Bob}(1^n,z))\}_{n\in\N}.
\end{multline*}
or
\begin{multline*}
\{view_{\mathfrak{Bob}}(\text{Alice}(1^n,(m_0,m_1)), \mathfrak{Bob}(1^n,z))\}_{n\in\N} \stackrel{c}{\equiv} \\
\{view_{\mathfrak{Bob}}(\text{Alice}(1^n,(m,m_1)), \mathfrak{Bob}(1^n,z))\}_{n\in\N}.
\end{multline*}
\end{enumerate}
\end{definition}

The 2-SIOT protocol is non-trivial because Bob can compute the jacobian $J_{AB}$ from $pk_A$ and obtain $\mathcal{G}_\beta$, thus $m_\beta = Dec(\mathcal{G}, c_\beta)$ is such that $\mathcal{G}(J_{AB}) = \mathcal{G}_\beta$ and $\beta$ is a unique element secretly chosen by Bob.

We observe that conditions \ref{mean_alice} and \ref{mean_bob} in the previous definition follow using the arguments in \cite{Barreto2018}. In particular, we have the following analogous results:

\begin{lemma}[Lemma 1 \cite{Barreto2018}]
Alice on input $pk_B$ cannot guess the bit $\beta$ with probability greater that $\frac{1}{2} + \varepsilon(n)$, for some negligible function $\varepsilon(n)$ and $n\in\N$.
\end{lemma}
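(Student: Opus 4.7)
The plan is to argue that the portion of Bob's message that depends on $\beta$ is computationally indistinguishable in the cases $\beta=0$ and $\beta=1$. Alice's view consists of her own secret state (the scalars $\{a_i\}$ and the isogeny $\varphi_A$), her public key $pk_A$, and Bob's message $pk_B = (\mathcal{G}(J_B), \{R_j\}_{j=1,\dots,4})$. Since $\mathcal{G}(J_B)$ is produced from $\varphi_B$ alone and is independent of $\beta$, the only $\beta$-dependent quantity in Alice's view is the tuple $\{R_j\}$, so it suffices to bound the distinguishing advantage for this tuple.

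First I would separate the two cases. When $\beta = 1$, each $R_j = \varphi_B(P_j) - U_j$ where $U_j$ is drawn uniformly from $J_B[2^n]$; since translation is a bijection of this finite group, the tuple $\{R_j\}$ is uniformly distributed in $J_B[2^n]^4$ and, crucially, statistically independent of $\varphi_B$ beyond its codomain. When $\beta = 0$ we instead have $R_j = \varphi_B(P_j)$, i.e.\ the actual images under $\varphi_B$ of the fixed basis $\{P_j\}$ of $J_H[2^n]$. Thus the lemma reduces to the claim that, given only $\mathcal{G}(J_B)$, the tuple $(\varphi_B(P_1),\varphi_B(P_2),\varphi_B(P_3),\varphi_B(P_4))$ is computationally indistinguishable from a uniform tuple in $J_B[2^n]^4$.

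Next I would exhibit the reduction. From an instance of a decisional assumption of the form stated in Problem \ref{dssi}, adapted to the $2^n$-torsion, one receives a pair $(\mathcal{G}(J_B), \{S_j\}_{j=1,\dots,4})$ where $\{S_j\}$ is either $\{\varphi_B(P_j)\}$ or uniformly random. The reduction simulates an honest Alice (sampling $\{a_i\}$, computing $\varphi_A$, $pk_A$), sets $pk_B := (\mathcal{G}(J_B), \{S_j\})$, and hands the resulting transcript to the purported distinguisher. Any non-negligible advantage $\alpha$ in guessing $\beta$ from $pk_B$ is inherited by the reduction, so $\alpha \leq \varepsilon(n)$ for some negligible $\varepsilon \colon \N \to [0,+\infty)$, which yields exactly the required bound $\tfrac{1}{2} + \varepsilon(n)$ on Alice's guessing probability.

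The main obstacle, as in \cite{Barreto2018}, is not the reduction itself but the careful formulation of the underlying decisional assumption: one must argue that pseudorandomness of the four torsion-point images under a random secret isogeny is at least as hard as the stated DSSI problem, which calls for a hybrid argument over the four coordinates and a check that the honestly generated $(\{a_i\},\varphi_A,pk_A)$ leaks nothing useful about $\varphi_B$. The genus $2$ setting introduces only additional bookkeeping, due to the fifteen possible $(2,2)$-isogenies at each step and the four basis points of $J_H[2^n]$ instead of two, but no new ideas beyond those of the elliptic curve argument are required.
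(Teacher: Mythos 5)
There is a genuine gap, and it sits at the heart of your argument. You model the masking points $U_j$ as uniformly random in $J_B[2^n]$, conclude that for $\beta=1$ the tuple $\{R_j\}$ is uniform, and thereby reduce the lemma to the claim that $(\varphi_B(P_1),\dots,\varphi_B(P_4))$ is computationally indistinguishable from a uniform tuple in $J_B[2^n]^4$ given $\mathcal{G}(J_B)$. That target claim is false, and efficiently so: the Weil pairing is compatible with isogenies, so $\langle \varphi_B(P_i),\varphi_B(P_j)\rangle_{2^n} = \langle P_i,P_j\rangle_{2^n}^{\deg\varphi_B}$, a quantity Alice can compute from public data since she knows the $P_j$ and $\deg\varphi_B = 3^m$. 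A uniform tuple satisfies these pairing relations only with negligible probability, so the distinguisher ``evaluate the pairings on the received points and compare'' breaks your decisional assumption outright --- and, under your modeling of the protocol, it also breaks the lemma itself: Alice would recover $\beta$ with probability close to $1$. In other words, the assumption you propose to reduce to is not merely an ``adaptation of Problem~\ref{dssi} requiring careful formulation''; it is a statement that the Weil pairing refutes.

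The missing idea is that the $U_j$ cannot be arbitrary (let alone uniform): they must be drawn from the subset of $J_B[2^n]^4$ for which the masked points still pass every pairing test, i.e.\ $\langle R_i,R_j\rangle_{2^n} = \langle P_i,P_j\rangle_{2^n}^{3^m}$ for all $i<j$, and more generally so that no $\lambda$-multiple of the mask creates a detectable mismatch. Working out exactly which $U_j$ satisfy this constraint is the actual technical content of the lemma in \cite{Barreto2018}, and it is what the present paper's indistinguishability section and Proposition~\ref{propomeva} are computing (the conditions $j_j^2 + i_j j_i = 0$ on the coefficients of the $U_j$ in the basis $\{\varphi_B(P_i)\}$). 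Only after restricting to such $U_j$ does the ``$\beta=1$ distribution'' become plausibly indistinguishable from the ``$\beta=0$ distribution'', and the residual hardness assumption is then about isogeny problems on the constrained coset, not about pseudorandomness over all of $J_B[2^n]^4$. Note also that the paper itself does not supply this argument --- its proof of the lemma is a deferral to \cite{Barreto2018} --- so the comparison here is really with the structure of the original genus~$1$ proof, whose essential step (the pairing obstruction and the choice of admissible masks) your sketch omits.
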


\begin{proof}
The proof of this lemma is quite technical and can be easily extended to our context after carefully checking the computations. Those interested in the details are welcome to check them in \cite{Barreto2018}.
\end{proof}

\begin{lemma}[cf. Lemma 2 \cite{Barreto2018}]
Assuming that SSCDH (Problem \ref{sscdh}) is hard, Bob cannot compute two distinct sets $\mathcal{G}_0$ and $\mathcal{G}_1$.
\end{lemma}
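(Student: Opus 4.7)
The plan is to argue the contrapositive by reduction to SSCDH (Problem \ref{sscdh}): if a polynomial-time $\mathfrak{Bob}$ can compute both $\mathcal{G}_0$ and $\mathcal{G}_1$ with non-negligible probability, then one can solve SSCDH. A first observation is that $\mathcal{G}_\beta$ is, by design of the protocol, recoverable by an honest Bob from his own secrets $\{b_i\}$ and the pushed-forward torsion $\{\varphi_A(Q_j)\}$ contained in $pk_A$; what actually requires proof is that the ``wrong branch'' $\mathcal{G}_{1-\beta}$ is infeasible to compute.

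I would next read off the algebraic content of $\mathcal{G}_{1-\beta}$. Because $\varphi_B$ has $3^m$-power degree it restricts to an isomorphism $J_H[2^n]\to J_B[2^n]$, so the basis $\{\varphi_B(P_j)\}$ lets $\mathfrak{Bob}$ expand $U_i=\sum_j [v_{ij}]\varphi_B(P_j)$ with scalars $v_{ij}$ that he knows. Substituting in the generators $\sum_i [a_i]\varphi_B(P_i)+(k-\beta)\sum_i [a_i]U_i$ of $K_{BA_k}$ and pulling back along $\varphi_B$, one finds $J_{BA_{1-\beta}}\cong J_H/\langle \widetilde K_A,K_B\rangle$, where $\widetilde K_A$ is the image of $K_A$ under the linear automorphism of $J_H[2^n]$ determined by $I\pm V^{T}$ for $V=(v_{ij})$. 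In particular, producing $\mathcal{G}_{1-\beta}$ amounts to solving an SSCDH instance whose Alice-side kernel is the modification $\widetilde K_A$ implicitly chosen by $\mathfrak{Bob}$ through $V$.

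To turn this into a reduction, given an SSCDH challenge $(J_A,J_B,\{\varphi_A(Q_j)\},\{\varphi_B(P_j)\})$ I would feed $\mathfrak{Bob}$ the public key $pk_A=(\mathcal{G}(J_A),\{\varphi_A(Q_j)\})$ extracted from the challenge and steer his session to use the challenge's $J_B$ together with its basis $\{\varphi_B(P_j)\}$. By hypothesis $\mathfrak{Bob}$ returns two distinct values $\mathcal{G}_0$ and $\mathcal{G}_1$; the one arising from the trivial masking ($V=0$ in the analysis above) equals $\mathcal{G}(J_H/\langle K_A,K_B\rangle)$, which is precisely the SSCDH answer. A coin-flip between the two outputs then gives a non-negligible advantage and contradicts the assumed hardness of SSCDH.

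The main obstacle will be embedding the SSCDH challenge into a transcript that $\mathfrak{Bob}$ cannot distinguish from a genuine protocol run, given that he chooses both his isogeny $\varphi_B$ and the masks $U_i$ adaptively after seeing $pk_A$. The route I would follow mirrors the bookkeeping of Lemma~2 in \cite{Barreto2018}: one rerandomises the challenge using $\mathfrak{Bob}$'s published $(\mathcal{G}(J_B),\{R_j\})$ and exploits the $\F_{2^n}$-linearity of the kernel construction to translate between the masked and unmasked branches, so that any non-negligible probability with which $\mathfrak{Bob}$ outputs two valid $G_2$-invariants translates into a non-negligible SSCDH advantage.
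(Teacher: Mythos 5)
Your high-level strategy is the same as the paper's: the paper's entire proof is the one-sentence assertion that computing both $\mathcal{G}_0$ and $\mathcal{G}_1$ ``would violate SSCDH'', so your identification of $J_{BA_{1-\beta}}\cong J_H/\langle\widetilde{K}_A,K_B\rangle$, with $\widetilde{K}_A$ a $V$-twisted copy of $K_A$, already supplies more content than the paper does. Nevertheless, the reduction you sketch has a concrete gap at the step where you ``steer his session to use the challenge's $J_B$ together with its basis $\{\varphi_B(P_j)\}$''. A malicious $\mathfrak{Bob}$ generates $J_B$, the scalars $\{b_i\}$ and the masks $\{U_i\}$ internally from his own randomness; a reduction controls only $pk_A$ and the public parameters, so it cannot substitute the challenge's $J_B$ (for which it does not know the corresponding $b_i$) without altering the distribution of $\mathfrak{Bob}$'s view --- and $\mathfrak{Bob}$'s success probability is only guaranteed on genuine executions. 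If instead you let $\mathfrak{Bob}$ run on the challenge's $pk_A$ with his own $K'_B$, his two outputs are invariants of $J_H/\langle K_A,K'_B\rangle$ and $J_H/\langle\widetilde{K}_A,K'_B\rangle$, neither of which is the challenge value $\mathcal{G}\left(J_H/\langle K_A,K_B\rangle\right)$, so the final ``coin-flip'' extraction does not answer the SSCDH instance you were given.

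A second, related gap: even granting the embedding, the ``wrong branch'' is not literally an instance of Problem~\ref{sscdh}. That problem asks for $\mathcal{G}\left(J_H/\langle K_A,K_B\rangle\right)$ given torsion images consistent with $K_A$, whereas $\mathcal{G}_{1-\beta}$ is the invariant of $J_H/\langle\widetilde{K}_A,K_B\rangle$ with $\widetilde{K}_A\neq K_A$ and no auxiliary torsion data for $\widetilde{K}_A$ is available to anyone. Bridging this requires either a re-randomization/self-reduction over the choice of $2^n$-torsion basis or an explicitly modified hardness assumption; your closing paragraph names this obstacle but defers it to ``the bookkeeping of Lemma~2 of Barreto et al.''\ without carrying it out. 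Since the paper itself also omits any such argument, the gap is shared with the source, but as written your proof does not close it.
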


\begin{proof}
This result follows immediately by reducing to absurd: if we assume that Bob is able to compute two distinct sets $\mathcal{G}_0$ and $\mathcal{G}_1$ then he would violate SSCDH (Problem \ref{sscdh}). 
\end{proof}

\begin{theorem}[cf. Theorem 3.4.1 \cite{Barreto2018}]
Assuming that the decisional Diffie-Hellman problem and the supersingular computational Diffie-Hellman problem are hard for $J_H(\F_{p^2})$, the 2-SIOT proposal is private oblivious transfer.
\end{theorem}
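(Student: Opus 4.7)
The plan is to verify each of the three clauses of the definition in turn. For non-triviality, nothing new is needed beyond the correctness subsection: the isomorphism $J_{AB}\cong J_H/\langle K_A,K_B\rangle\cong J_{BA_\beta}$ already yields $\mathcal{G}(J_{AB})=\mathcal{G}_\beta$, so when both parties are honest Bob computes $m_\beta=\mathrm{Dec}(\mathcal{G}_\beta,c_\beta)$ as required.

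For the clause protecting Bob's bit from a malicious $\mathfrak{Alice}$, I would invoke the first of the two preceding lemmas essentially verbatim. The view of $\mathfrak{Alice}$ in an honest execution with Bob is determined, up to her own coins and her (possibly adversarially chosen) first message, by $pk_B=(\mathcal{G}(J_B),\{R_j\})$ with $R_j=\varphi_B(P_j)-\beta U_j$; the lemma states that no polynomial-time algorithm extracts $\beta$ from this data with more than negligible advantage, and a standard distinguisher-vs-predictor reduction turns this into the required computational indistinguishability of the two ensembles indexed by $\beta=0$ and $\beta=1$.

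For the clause protecting Alice's unused message from a malicious $\mathfrak{Bob}$, I would combine the second preceding lemma with the DDH hypothesis on $J_H(\F_{p^2})$ and with semantic security of $(\mathrm{Enc},\mathrm{Dec})$, via a two-step hybrid. Bob's view contains $(c_0,c_1)=(\mathrm{Enc}(\mathcal{G}_0,m_0),\mathrm{Enc}(\mathcal{G}_1,m_1))$; from $\beta$ he reconstructs $\mathcal{G}_\beta$ and hence $m_\beta$, but by the second lemma he cannot reconstruct $\mathcal{G}_{1-\beta}$ under SSCDH. Upgrading this to the decisional statement that $\mathcal{G}_{1-\beta}$ is computationally indistinguishable from a uniformly sampled key, the first hybrid replaces the real $\mathcal{G}_{1-\beta}$ by such a key, and the second hybrid uses semantic security of the symmetric scheme to replace $c_{1-\beta}$ by an encryption of any equal-length message $m$. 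Concatenating these hybrids establishes either of the two alternative indistinguishabilities required by the definition, and hence privacy for Bob.

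The main obstacle I anticipate is precisely the passage from the second lemma to that distributional step: the lemma only rules out explicit computation of both of $\mathcal{G}_0,\mathcal{G}_1$, which is strictly weaker than what the hybrid requires, namely that the unrevealed $\mathcal{G}_{1-\beta}$ is indistinguishable from a uniformly random key in the relevant domain. This is the reason the theorem additionally hypothesises the \emph{decisional} Diffie--Hellman problem, not merely SSCDH; one must also check that the uniform sampling of $\{U_j\}\subset J_B[2^n]$ in the protocol induces enough entropy on the family of $(2^n,2^{n-k},2^k)$-isogenies $\varphi'_{A_k}$ for the reduction to go through. Once this distributional step is justified, the remaining ingredient, semantic security of the symmetric scheme under a uniform key, is entirely standard and the three clauses assemble into the statement of the theorem.
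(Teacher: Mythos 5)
Your plan is correct and follows essentially the same route the paper intends: non-triviality from the correctness isomorphism, privacy for $\mathfrak{Alice}$ from the first lemma, and privacy for $\mathfrak{Bob}$ from the second lemma combined with the decisional hypothesis and the security of $(Enc, Dec)$ --- indeed the paper supplies no proof body for this theorem at all, deferring entirely to the two preceding lemmas and to Theorem 3.4.1 of \cite{Barreto2018}, so your decomposition makes explicit exactly what the paper leaves implicit. The obstacle you flag, that the second lemma only excludes \emph{computing} both $\mathcal{G}_0$ and $\mathcal{G}_1$ whereas the hybrid needs $\mathcal{G}_{1-\beta}$ to be \emph{indistinguishable} from a fresh key, is a genuine gap that the paper also leaves unaddressed beyond hypothesising DDH, so your treatment is, if anything, more candid than the source.
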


\subsection{Indistinguishability}

Let us assume a setting in which $\mathfrak{Alice}$ receives $pk_B$ from Bob. A priori, $\mathfrak{Alice}$ is not able to decide whether she received $(\mathcal{G}(J_B),\{\varphi_B(P_j)\}_{j=1,2,3,4})$ or if she received $(\mathcal{G}(J_B),\{\varphi_B(P_j)-U_j\}_{j=1,2,3,4})$, that is: $\mathfrak{Alice}$ is not able to decide if Bob got $\beta = 0$ or $\beta = 1$. In order to overcome this problem, $\mathfrak{Alice}$ could use the Weil pairing.

Let us assume that all pairings have order $2^n$. The points $\varphi_B(P_j)$ satisfy $\langle \varphi_B(P_i), \varphi_B(P_j) \rangle_{2^n} = (\langle P_i, P_j \rangle_{2^n})^{3^m}$ for each $1\leq i < j \leq 4$. If the equality does not hold for both $(\mathcal{G}(J_B),\{\varphi_B(P_i)\}_i)$ and $(\mathcal{G}(J_B),\{\varphi_B(P_i)-U_i\}_i)$, then $\mathfrak{Alice}$ would be able to know the key used by Bob.

Therefore, as $\mathfrak{Alice}$ can add any multiple of $\{U_j\}_{j=1,2,3,4}$ to $\{\varphi_B(P_j)\}_{j=1,2,3,4}$ and look for a mismatch, we need to have\footnote{Note the abuse of notation in the Weil pairing.}, $\forall i,j\in\{1,2,3,4\}$ and $\forall \lambda\in \mathbb{F}_{2^n}$:

\begin{equation}\label{weilproduct}
\left\langle\varphi_B(P_i) + \lambda U_i, \varphi_B(P_j) + \lambda U_j\right\rangle = \prod_{i < j}\langle\varphi_B(P_i), \varphi_B(P_j)\rangle.
\end{equation}

For each $i\in\{1,2,3,4\}$, we write $\varphi_i := \varphi_B(P_i)$ and recall that as $U_i\in J_B[2^n]$, then 

\begin{equation}
U_i = \alpha_i\varphi_1 + \beta_i\varphi_2 + \gamma_i\varphi_3 + \delta_i\varphi_4.
\end{equation}

For the sake of simplicity, we consider the case $i = 1, j = 2$. Then condition (\ref{weilproduct}) implies:

\begin{equation*}
\begin{split}
& \langle\varphi_1 + \lambda U_1, \varphi_2 + \lambda U_2\rangle \\
& = \langle \varphi_1 + \lambda(\alpha_1 \varphi_1 + \beta_1 \varphi_2 + \gamma_1 \varphi_3 + \delta_1 \varphi_4), \varphi_2 + \lambda(\alpha_2 \varphi_1 + \beta_2 \varphi_2 + \gamma_2 \varphi_3 + \delta_2 \varphi_4)\rangle \\
& = \langle(1+\lambda\alpha_1)\varphi_1, (1+\lambda\beta_2)\varphi_2\rangle \langle(1+\lambda\alpha_1)\varphi_1, \lambda\gamma_2\varphi_3\rangle \langle(1+\lambda\alpha_1)\varphi_1,\lambda\delta_2\varphi_2\rangle \\
& \cdot \langle\lambda\beta_1\varphi_2, \lambda\alpha_2\varphi_1\rangle \langle\lambda\beta_1\varphi_2,\lambda\gamma_2\varphi_3\rangle \langle\lambda\beta_1\varphi_2,\lambda\delta_2\varphi_4\rangle \langle\lambda\gamma_1\varphi_3, (1+\lambda\beta_2)\varphi_2\rangle \\
& \cdot \langle\lambda\gamma_1\varphi_3,\lambda\alpha_2\varphi_1\rangle \langle\lambda\gamma_1\varphi_3,\lambda\delta_2\varphi_4\rangle \langle\lambda\delta_1\varphi_4,(1+\lambda\beta_2)\varphi_2\rangle \langle\lambda\delta_1\varphi_4,\lambda\alpha_2\varphi_1\rangle \\
& \cdot \langle\lambda\delta_1\varphi_4, \lambda\gamma_2\varphi_3\rangle \\
& = \langle\varphi_1,\varphi_2)^{(1+\lambda\alpha_1)(1+\lambda\beta_2)}\langle\varphi_1,\varphi_3\rangle^{(1+\lambda\alpha_1)\lambda\gamma_2}\langle\varphi_1,\varphi_4\rangle^{(1+\lambda\alpha_1)\lambda\delta_2} \\
& \cdot \langle\varphi_2,\varphi_1\rangle^{\lambda^2\beta_1\alpha_2}\langle\varphi_2,\varphi_3\rangle^{\lambda^2\beta_1\gamma_2}\langle\varphi_2,\varphi_4\rangle^{\lambda^2\beta_1\delta_2}\langle\varphi_3,\varphi_2\rangle^{\lambda\gamma_1(1+\lambda\beta_2)} \\
& \cdot \langle\varphi_3,\varphi_1\rangle^{\lambda^2\gamma_1\alpha_2}\langle\varphi_3,\varphi_4\rangle^{\lambda^2\gamma_1\delta_2}\langle\varphi_4,\varphi_2\rangle^{\lambda\delta_1(1+\lambda\beta_2)}\langle\varphi_4,\varphi_1\rangle^{\lambda^2\gamma_1\alpha_2}\langle\varphi_4,\varphi_3\rangle^{\lambda^2\delta_1\gamma_2} \\
& = \langle\varphi_1, \varphi_2\rangle^{\lambda^2(\alpha_1\beta_2-\alpha_2\beta_1)+\lambda(\beta_2+\alpha_1)+1}\langle\varphi_1,\varphi_3\rangle^{\lambda\alpha_2\gamma_1+\lambda(\gamma_2 + \alpha_2\gamma_1)} \\
& \cdot \langle\varphi_1,\varphi_4\rangle^{\lambda^2(\gamma_2\alpha_1-\gamma_1\alpha_2)+\lambda\delta_2} \langle\varphi_2,\varphi_3\rangle^{\lambda^2(\beta_1\gamma_2-\beta_2\gamma_1)-\lambda\gamma_1}\langle\varphi_2,\varphi_4\rangle^{\lambda^2(\beta_1\delta_2-\beta_2\delta_1)-\lambda\delta_2} \\
& \cdot \langle\varphi_3, \varphi_4\rangle^{\lambda^2(\gamma_1\delta_2-\delta_1\gamma_2)} = \langle\varphi_1, \varphi_2\rangle \langle\varphi_1, \varphi_3\rangle \langle\varphi_1, \varphi_4\rangle \langle\varphi_2, \varphi_3\rangle \langle\varphi_2, \varphi_4\rangle \langle\varphi_3, \varphi_4\rangle.
\end{split}
\end{equation*}

Therefore:

\begin{equation}\label{system}
\begin{cases}
\lambda^2(\alpha_1\beta_2 - \alpha_2\beta_1) + \lambda(\beta_2 + \alpha_1) + 1 & = 1\\
-\lambda^2\alpha_2\gamma_1 + \lambda(\alpha_2\gamma_1 + \gamma_2) & = 1 \\
\lambda^2(\delta_2\alpha_1 - \gamma_1\alpha_2) + \lambda\delta_2 & = 1 \\
\lambda^2(\beta_1\gamma_2 - \beta_2\delta_1) - \lambda\gamma_1 & = 1 \\
\lambda^2(\beta_1\delta_2 - \beta_2\delta_1) - \lambda\delta_2 & = 1 \\
\lambda^2(\gamma_1\delta_2 - \delta_1\gamma_2) & = 1
\end{cases}
\end{equation}

Observe that the first condition in (\ref{system}) implies:

\begin{equation}
\lambda^2(\alpha_1\beta_2 - \alpha_2\beta_1) + \lambda(\beta_2 + \alpha_1) = 0 \Leftrightarrow \beta_2^2 + \alpha_2\beta_1 = 0
\end{equation}

After performing the above tedious but straightforward computations for each pair $i,j \in \{1,2,3,4\}$ we are able to state the following result:

\begin{proposition}\label{propomeva}
Let $i,j\in\{1,2,3,4\}$ denote the variable sets $\alpha, \beta, \gamma, \delta$ and the subindices $\{i,j\}$ specify the variable. Then the identity $j^2_j + i_jj_i = 0$ for each pair $\{i,j\}\in \{1,2,3,4\}$ is a necessary condition for $\mathfrak{Alice}$ not to be able to distinguish between $(\mathcal{G}(J_B),\{R_i\}_i)$ or $(\mathcal{G}(J_B),\{\varphi_B(P_i)\}_i)$. 
\end{proposition}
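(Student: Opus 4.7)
The plan is to repeat, for each of the five remaining pairs $\{i,j\}\subset\{1,2,3,4\}$, the expansion-and-match calculation already carried out in detail for $\{1,2\}$. Since each $U_k\in J_B[2^n]$ is written in the basis $\{\varphi_1,\varphi_2,\varphi_3,\varphi_4\}$ via the four positional scalars, and the Weil pairing is bilinear, alternating and non-degenerate (the proposition following equation (1)), the structure of the computation is the same across pairs; only the labelling of the letters changes.

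Concretely, for a fixed pair $(i,j)$ I would expand $\langle \varphi_i+\lambda U_i,\varphi_j+\lambda U_j\rangle$ bilinearly after substituting the $\varphi$-basis expressions for $U_i$ and $U_j$. This produces a product of the six base pairings $\langle\varphi_k,\varphi_\ell\rangle$ with $k<\ell$, each raised to an exponent that is a polynomial of degree at most $2$ in $\lambda$. Demanding equality with $\prod_{k<\ell}\langle\varphi_k,\varphi_\ell\rangle$ yields, by non-degeneracy, a system of six relations, strictly analogous to (\ref{system}). The entry corresponding to the pairing $\langle\varphi_i,\varphi_j\rangle$ itself has the form $\lambda^2 A+\lambda B+1=1$, where $B$ is the sum of the two diagonal coefficients (that of $\varphi_i$ in $U_i$ and that of $\varphi_j$ in $U_j$) and $A$ is the $2\times 2$ minor built from those two diagonal coefficients together with the two off-diagonal ones. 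Because the identity has to hold for every $\lambda\in\F_{2^n}$, both $A$ and $B$ vanish; eliminating the $U_i$-diagonal coefficient through $B=0$ inside $A=0$ produces precisely the claimed $j_j^2+i_jj_i=0$ after the positional relabelling prescribed by the proposition.

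The main obstacle is not conceptual but combinatorial: there are twelve cross-terms per pair, and the alternation identity $\langle\varphi_\ell,\varphi_k\rangle=\langle\varphi_k,\varphi_\ell\rangle^{-1}$ must be applied consistently, which is where sign errors are most likely to enter. I would therefore organise the exponents of the six base pairings into a table indexed by $(i,j)$, verify that this table is equivariant under the relabelling $\alpha\leftrightarrow\beta\leftrightarrow\gamma\leftrightarrow\delta$ induced by permutation of positions, and then simply read off the diagonal entry for each pair. The remaining five equations of each system contribute further necessary relations among the $\alpha_k,\beta_k,\gamma_k,\delta_k$, but are not used here, since the proposition only claims the necessity of the diagonal identity extracted from the $(i,j)$-entry.
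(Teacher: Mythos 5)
Your proposal is correct and follows essentially the same route as the paper: expand $\langle\varphi_i+\lambda U_i,\varphi_j+\lambda U_j\rangle$ bilinearly in the $\varphi$-basis, force the exponent of each base pairing to be constant in $\lambda$, and eliminate the linear coefficient ($\beta_2+\alpha_1=0$ in the $(1,2)$ case) inside the quadratic one to obtain $j_j^2+i_jj_i=0$, then repeat for the remaining pairs by relabelling. Your suggestion to organise the exponents into a table equivariant under permutation of positions is a cleaner way to package the paper's ``tedious but straightforward'' case-by-case verification, but it is the same argument.
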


\begin{remark}
As an example, for indices $i = 2, j = 3$ the condition in Proposition \ref{propomeva} becomes $\gamma^2_3 + \beta_3\gamma_2 = 0$. In the case $i = j$ we obtain $\alpha_1 = \beta_2 = \gamma_3 = \delta_4 = 0$.  
\end{remark}
%\clearpage{\pagestyle{plain}\cleardoublepage}
\section{Conclusion}

We have presented a post-quantum protocol for an oblivious transfer based in principally polarized supersingular abelian surfaces combining and adjusting the ideas of \cite{Barreto2018} and \cite{Flynn2019}. 

We have proved that the proposal is private in settings with malicious sender and/or receiver and we also have given necessary conditions for the proposal to be undistinguishable in a setting involving a malicious sender. Furthermore we have presented the mathematical problems that provide the post-quantum security of the proposal. Nevertheless, the analysis of the situation preventing possible decryptions from a malicious receiver remains open.

It is important to bare in mind that this proposal follows the ideas from \cite{Flynn2019}, and therefore it suffers from the same bad habits: although the protocol is implementable, it is too slow to be practical, therefore improvements on the algorithms must be obtained in order to get a functional version of $2$-SIOT.
%\clearpage{\pagestyle{plain}\cleardoublepage}

%%%%%%%%%%%%%%%%%%%%%%%%%%%%%%%%%%%%%%%%%%%%%%%%%%%%%%%%%%%%%%%%%%%%%%%%%%%%%%%%%%
%%%%%%%%%%%%%%%%%%%%%%%%%%%%%%%%%%%%%%%%%%%%%%%%%%%%%%%%%%%%%%%%%%%%%%%%%%%%%%%%%%
%%%%%%%%%%%%%%%%%%%%%%%%%%%%%%%%%%%%%%%%%%%%%%%%%%%%%%%%%%%%%%%%%%%%%%%%%%%%%%%%%%

\end{document}